\newtheorem{thm}{Theorem}[section]
\newtheorem{lem}[thm]{Lemma}
\newtheorem{defn}[thm]{Definition}
\newtheorem{example}[thm]{Example}
\numberwithin{equation}{section}
\DeclareMathOperator{\sign}{sign}
\begin{document}
\title{\textbf{Color image restoration with
impulse noise based on fractional-order total variation and framelet}}
\author{
R. Parvaz\footnote{Corresponding author, Email: reza.parvaz@yahoo.com, rparvaz@uma.ac.ir}}
\date{}
\maketitle
\begin{center}
Department of Mathematics, University of Mohaghegh Ardabili,
56199-11367 Ardabil, Iran.\\
\end{center}
\begin{abstract}
\noindent
Restore lost images due to noise and blurred is
a burgeoning subject in image processing and
despite the different algorithms on this subject,
but the effort to improve is always considered.
The definition of fractional derivatives in recent
years has created a powerful tool for this purpose.
In the present paper, using fractional-order total
variation and framelet transform,
the nonconvex model for
image restoration with impulse noise problem is improved.
Then by alternating direction method of multipliers (ADMM) and
primal-dual problem, the proposed model is solved.
 The convergence of the proposed
algorithm is studied.
And the proposed algorithm is evaluated using different types of tests.
The output results show the efficiency of proposed method.

\end{abstract}
\vskip.3cm \indent \textit{\textbf{Keywords:}}
Image deblurring; Fractional total variation; Framelet; Nonconvex.
\vskip.3cm

\section{Introduction}
The image blurring and noise process can be considered by using the following
formula
\begin{align*}
F=h\circledast U+N_\epsilon,
\end{align*}
where $F$ and $U \in \mathds{R}^{n \times m}$ are observed and original images, respectively, and
$N_\epsilon$ shows noise.
Also, $h\in \mathds{R}^{r \times s}$ is blurred kernel and $\circledast$ denotes
two-dimensional convolution operator. The above relation can be rewritten as a linear equation as follows
\begin{align*}
f=Au+n_\epsilon,
\end{align*}
where $f$ and $u\in \mathds{R}^{nm \times 1}$ denote the reshaped of observed and original images, respectively. Also $A \in \mathds{R}^{nm \times nm}$ is obtained according to the blurred kernel and boundary conditions for the blurred process. There are four main boundary conditions that include
zero, periodic, reflexive and anti-reflexive boundary conditions.
Depending on each of these boundary conditions, the matrix $A$ has a special structure.
For example in the zero boundary condition, $A$ is a block toeplitz
with toeplitz blocks (BTTB) matrix or in the periodic boundary condition,
$A$ is a block circulant with circulant blocks (BCCB) matrix.
Due to the noise and large size of the unknown coefficients  and ill-conditioned for this linear equation, it is not possible to solve this problem directly.
The first attempts to solve this problem can be made on SVD decompositions. But in most cases, this method does not have the desired output. Other method, that can be used for this problem is the basis pursuit
problem \cite{18} by minimization problem as
\begin{align*}
\min_{u} \{\|u \|_1: Au=f\},
\end{align*}
which is studied in many papers as \cite{19,20}. One of the most popular methods that has attracted the most attention is the use of total variation (TV) based on regularization scheme (the ROF model). This model is introduced by Rudin, Osher, and Fatemi \cite{21} as
\begin{align*}
\min_{u} \frac{\lambda}{2}\|Au-f\|^2_2+\int_{\Omega}|Du|,
\end{align*}
where $\Omega$ shows a bounded open subset with Lipschitzian boundary in $\mathds{R}^{2}$ and
$Du$ denotes the derivative of $u$. The idea of using total variation has been considered in recent years and has been used in various articles, for example \cite{22,23,24,25}.
One of the most important tools that has improved this method in recent years is the use of fractional derivatives. Despite the existence of different definitions for fractional derivatives, such as
Riemann-Liouville (R-L) and Caputo definition,
the Gr\"{u}nwald-Letnikov (G-L) definition is often used in image processing due to its lower computational complexity than other definitions. Based on these types of derivatives,
fractional-order total variation (FTV) is introduced and used in image and signal processing
\cite{26,27,28}. With the development of the wavelet and framelet concept, an efficient tool for
image and signal processing is developed that has been used in various articles as \cite{29,30,31,32}.
These concepts are introduced in the following subsections. The reader can find more information about these concepts in \cite{8,9,10}. The $l_1$ norm based on frame transform for restoration problem is introduced
by Dong, Ji and Shen \cite{36}. Due to the structure of $l_1$ norm as nonsmooth
and nonseparable, solving a problem that includes these norms is hard. The split Bregman algorithm
 is a suitable tool for solving these types of problems \cite{33,34,35}. In the proposed model
 a non-convex $(l_1-l_2)$-norm is considered as regularization term.
 $(l_1-l_2)$-norm is studied in many papers as \cite{3,7}. Also, this norm based on frame transform
is used by Jingjing Liu  et al in \cite{3}  as following model
\begin{align*}
\min _{u} \|Au-f\|_1+\lambda_1\big( \|Wu\|_1-\beta \|u\|_2\big),
\end{align*}
where $W$ represent the matrix of framelet transform.
As it is clear from this model, the total variation phrase is not observed in this model. Therefore, adding this expression can improve the algorithm.
Total variation method can preserve edges very well in the restored image,
then this helps to improve the restored image. Although the ordinary method of minimization is slow for this problem \cite{11,12},
the optimal algorithm based on primal-dual is introduced for solving
this problem in \cite{14,15,16,17}. \\

The organization of this paper is as follows: In Section 2,
the tools used in the proposed algorithm briefly introduced.
The details of the proposed method are given in Section 3.
The convergence analysis of the proposed algorithm is given in
Section 4. Simulation results and algorithm
analysis are studied in Section 5.
 Also the summary of paper is given in Section 6.

\section{Preliminaries}
\begin{defn}
Let $\mathcal{H}$ be a separable Hilbert space.
Then the sequence $F=\{f_i\}_{i\in I}\subseteq\mathcal{H}$
is named a frame in $\mathcal{H}$ if
there exist two constants as $\rho_1$ and $\rho_2$ such that for
all $f\in \mathcal{H}$
\begin{align*}
\rho_1 \|f\|^2 \leq \sum_{i\in I}|\langle f,f_i\rangle|^2 \leq \rho_2 \|f\|^2.
\end{align*}
When $\rho_1=\rho_2=1$, frame is called a Parseval frame in $\mathcal{H}=L_2(\mathds{R})$.
\end{defn}
In this paper, particular Parseval framelet systems in $\mathcal{H}$ that are constructed by
B-spline whose refinement mask is $h_0= \frac{1}{4}[1,2,1]$, with two corresponding framelet masks
$h_1=\frac{\sqrt{2}}{4}[1,0,-1]$ and $h_2=\frac{1}{4}[-1,2,-1]$.
Complete information about frame and framelet is available in \cite{8,9,10}, and the reader can refer to these sources.

\begin{defn}
The G-L fractional-order derivatives $D^{\alpha}_{x}$ and
$D^{\alpha}_{y}$ for input image $u$ of order $\alpha \in \mathds{R}^{+}$  are defined as
\begin{align*}
&D^{\alpha}_{x}u_{i,j}:=\sum_{l=0}^{k-1}\phi^\alpha_l u_{i-k,j},\\
&D^{\alpha}_{y}u_{i,j}:= \sum_{l=0}^{k-1}\phi^\alpha_l u_{i,j-k},
\end{align*}
where $\phi^\alpha_l=(-1)^l \frac{\Gamma(\alpha+1)}{\Gamma(l+1)\Gamma(\alpha+1-l)}$ and
$k$ denotes the number of neighboring pixels. Also the discrete fractional-order gradient
is considered as $\nabla^{\alpha}u=[D^{\alpha}_{x}u,D^{\alpha}_{y}u]^T$.
\end{defn}

\begin{defn}
The adjoint operators of the fractional-order derivatives are defined as
\begin{align*}
&(D^{\alpha}_{x})^Tu_{i,j}:=\sum_{l=0}^{k-1}\phi^\alpha_l u_{i+k,j},\\
&(D^{\alpha}_{y})^Tu_{i,j}:= \sum_{l=0}^{k-1}\phi^\alpha_l u_{i,j+k}.
\end{align*}
\end{defn}
Based on above definitions for a vector function $\textbf{p}(x,y)=\big(p_1(x,y), p_2(x, y)\big)$
the fractional divergence operator can be obtained as
\begin{align*}
div^{\alpha}\textbf{p}=(-1)^{\alpha}(\nabla^{\alpha})^T\textbf{p}=(-1)^{\alpha}\big( (D^{\alpha}_{x})^Tp^1_{i,j}+(D^{\alpha}_{y})^Tp^2_{i,j} \big).
\end{align*}

\section{Proposed minimization model and iterative method}
\subsection{Proposed model}
In the proposed model, the following model based on FTV is introduced for
restoration of image with impulse noise
\begin{align}\label{eq1}
\min _{u} \|Au-f\|_1+\lambda_1\big( \|Wu\|_1-\beta \|u\|_2\big)+\lambda_2\|u\|_{FTV},
\end{align}
where $\lambda_1,\lambda_2$ and $\beta$ are positive given parameters.
Also $\| \cdot\|_{FTV}$ denotes fractional-order total variation norm that defined as
\begin{align*}
\|u\|_{FTV}=\sum_{i,j}\sqrt{(D^{\alpha}_xu_{i,j})^2+(D^{\alpha}_yu_{i,j})^2}.
\end{align*}
This model contains a nonconvex sentence. There are several methods for solving nonconvergent optimization problems, for example see \cite{newr1}. In the proposed method, alternating direction method of multipliers (ADMM) is used for solving this problem. By auxiliary variables
$m=\{m_i\}^{3}_{i=1}$, \eqref{eq1} can be written as
\begin{align}\nonumber
 &  \min _{u} \|m_1\|_1+\lambda_1\big( \|m_2\|_1-\beta \|u\|_2\big)+\lambda_2\|m_3\|_{FTV}, \\ \label{eq2}
 &\text{s.t.}~~ m_1=Au-f,~~m_2=Wu,~~m_3=u.
\end{align}
The augmented Lagrangian for \eqref{eq2} is obtained as
\begin{align}\nonumber
 \min_{u,m,n}L(u,m_1,m_2,m_3&,n_1,n_2,n_3)=
 \min_{u,m,n}\|m_1\|_1+\lambda_1\big( \|m_2\|_1-\beta \|u\|_2\big)
 +\lambda_2\|m_3\|_{FTV} \\\nonumber
  &+\mu_1\langle Au-f-m_1,n_1\rangle
  +\frac{\mu_1}{2}\|Au-f-m_1\|^2_2+\mu_2\langle Wu-m_2,n_2\rangle\\\label{eq3}
  &+\frac{\mu_2}{2}\|Wu-m_2\|^2_2+\mu_3\langle u-m_3,n_3\rangle+\frac{\mu_3}{2}\|u-m_3\|^2_2,
\end{align}
where $n=\{n_i\}^{3}_{i=1}$ and $\mu_i>0,~i=1,2,3$ are the Lagrange multipliers and penalty parameters, respectively. The extended iterative algorithm for solving problem \eqref{eq3} based on ADMM is given as

\begin{align} \nonumber
u^{k+1}&=\arg \min_{u}-\lambda_1 \beta \|u\|_2+\mu_1\langle Au-f-m^k_1,n^k_1\rangle
+\frac{\mu_1}{2}\|Au-f-m^k_1\|^2_2\\\nonumber
&+\mu_2\langle Wu-m^k_2,n^k_2\rangle
+\frac{\mu_2}{2}\|Wu-m^k_2\|^2_2+\mu_3\langle u-m^k_3,n^k_3\rangle\\\label{eq4}
&+\frac{\mu_3}{2}\|u-m^k_3\|^2_2,\\ \label{eq5}
m^{k+1}_1&=\arg \min_{m_1}\|m_1\|_1+\mu_1\langle Au^{k+1}-f-m_1,n^k_1\rangle
  +\frac{\mu_1}{2}\|Au^{k+1}-f-m_1\|^2_2,
\end{align}

\begin{align} \label{eq6}
n^{k+1}_1&=n^k_1+Au^{k+1}-f-m^{k+1}_1,\\ \label{eq7}
m^{k+1}_2&=\arg \min_{m_2}\lambda_1 \|m_2\|_1+\mu_2\langle Wu^{k+1}-m_2,n^{k}_2\rangle+\frac{\mu_2}{2}\|Wu^{k+1}-m_2\|^2_2,\\\label{eq8}
n^{k+1}_2&=n^{k}_2+Wu^{k+1}-m^{k+1}_2,\\ \label{eq9}
m^{k+1}_3&=\arg \min_{m_3}\lambda_2 \|m_3\|_{FTV}
+\mu_3\langle u^{k+1}-m_3,n^{k}_3\rangle+\frac{\mu_3}{2}\|u^{k+1}-m_3\|^2_2,\\ \label{eq10}
n^{k+1}_3&=n^{k}_3+u^{k+1}-m^{k+1}_3.
\end{align}
\subsection{Solve subproblems}
In this subsection, the solution to each
 of the subproblems introduced in the previous subsection are studied.
 For \eqref{eq4} by using the optimal condition, we get
\begin{align}\nonumber
\big( \mu_1 A^T A&+\mu_3I+\mu_2I-\lambda_1 \beta/\| u^k\|_2I\big)u\\\label{eqn1}
&=\mu_1 A^T(f+m^k_1-n^k_1)+\mu_2 W^T(m^k_2-n^k_2)+\mu_3(m^k_3-n^k_3).
\end{align}
Under the periodic condition for blurred processing, the blurred matrix can be generated
as block circulant with circulant  blocks (BCCB) matrix.
It is well known in the field of linear algebra that these types of matrices are decomposed using Fourier transform. And due to the fast Fourier transform (FFT), the calculation is performed faster.
So by considering the periodic condition, Eq. \eqref{eqn1} is changed as
\begin{align}\nonumber
\big( \mu_1 \Lambda^{\ast} \Lambda&+(\mu_3+\mu_2-\lambda_1 \beta/\| u^k\|_2)I\big)Fu^{k+1}\\\label{eqn2}
&=\mu_1 \Lambda^{\ast} F(f+m^k_1-n^k_1)+\mu_2 F W^T(m^k_2-n^k_2)+\mu_3F(m^k_3-n^k_3),
\end{align}
where $\ast$  and $F$ denote the complex
conjugacy and fast Fourier transform, respectively.
Also $\Lambda$ is a diagonal matrix dependent on the
eigenvalues of the blurred matrix.
The equation \eqref{eqn2} can be easily solved using Fourier calculations.\\

Subproblem \eqref{eq5} can be rewritten as
\begin{align*}
m^{k+1}_1&=\arg \min_{m_1}\|m_1\|_1+\frac{\mu_1}{2}\|m_1-(Au^{k+1}-f+n^k_1)\|^2_2.
\end{align*}
The solution for the above problem based on proximal mapping for $l_1$-norm can be obtained
as
\begin{align}\label{eq12}
m^{k+1}_1=\Psi_{1/\mu_1}(Au^{k+1}-f+n^k_1),
\end{align}
where $\Psi$ is defined as
\begin{align*}
\Psi_{a}(x)=\sign(x)\max(|x|-a,0).
\end{align*}
In a similar way, \eqref{eq7} can be written as
\begin{align*}
m^{k+1}_2&=\arg \min_{m_2}\lambda_1 \|m_2\|_1
+\frac{\mu_2}{2}\|m_2-(Wu^{k+1}+n^{k}_2)\|^2_2,
\end{align*}
 and the solution by proximal mapping is earned as
\begin{align}\label{eq14}
m^{k+1}_2=\Psi_{\lambda_1/\mu_2}(Wu^{k+1}+n^{k}_2).
\end{align}
In order to avoid complex calculations for Eq. \eqref{eq9}, the dual problem is used
to find the solution. In the first step, Eq. \eqref{eq9} is changed as
\begin{align}\label{eq15}
  m^{k+1}_3&=\arg \min_{m_3}\lambda_2 \|m_3\|_{FTV}
+\frac{\mu_3}{2}\|m_3-(u^{k+1}-n^{k}_3)\|^2_2,
\end{align}
then in the next step, the dual problem is obtained and solved for this problem.
\begin{lem}\label{lem1}
If we consider $J(u)=\min_{u}\|u\|_{FTV}$ then the dual problem of $J(u)$ is obtained as \cite{16}
\begin{align*}
J(u)=\sup_{\textbf{p}}~\langle \textbf{p}, \nabla^{\alpha}u \rangle-J^{\ast}(\textbf{p}),
\end{align*}
where
\begin{align*}
J^{\ast}(\textbf{p})= \left\{ \begin{array}{l}
 \begin{array}{*{20}{c}}
   0, & \text{if}~~|\textbf{p}|\leq 1,   \\
\end{array} \\
 \begin{array}{*{20}{c}}
   \infty, & \text{if}~~|\textbf{p}|> 1.   \\
\end{array} \\
 \end{array} \right.
\end{align*}
\end{lem}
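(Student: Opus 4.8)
The plan is to recognize that $J(u)=\|u\|_{FTV}$ is a mixed $\ell_1$--$\ell_2$ norm of the fractional gradient field and to derive the claimed dual representation from the pointwise self-duality of the Euclidean norm, mirroring the classical Chambolle-type duality for ordinary total variation. Writing $\nabla^{\alpha}u_{i,j}=(D^{\alpha}_xu_{i,j},D^{\alpha}_yu_{i,j})$ for the two-vector attached to each pixel, Definition of the FTV norm gives $J(u)=\sum_{i,j}|\nabla^{\alpha}u_{i,j}|$, where $|\cdot|$ denotes the Euclidean norm on $\mathds{R}^2$.

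First I would invoke the elementary variational identity $|w|=\sup_{|q|\leq 1}\langle q,w\rangle$, valid for every $w\in\mathds{R}^2$, with the supremum attained at $q=w/|w|$ when $w\neq 0$ (and trivially at any admissible $q$ when $w=0$). Introducing a dual vector field $\textbf{p}=\{p_{i,j}\}$ whose components $p_{i,j}$ lie in $\mathds{R}^2$ and applying this identity at each pixel to $w=\nabla^{\alpha}u_{i,j}$, I obtain $|\nabla^{\alpha}u_{i,j}|=\sup_{|p_{i,j}|\leq 1}\langle p_{i,j},\nabla^{\alpha}u_{i,j}\rangle$ for every $(i,j)$.

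Next I would sum over all pixels. Since the constraint $|\textbf{p}|\leq 1$ means $|p_{i,j}|\leq 1$ separately for each $(i,j)$, the feasible set is a product of unit balls and the dual variables decouple across pixels; this lets me interchange the finite sum with the pointwise suprema and merge them into a single constrained supremum, yielding $J(u)=\sup_{|\textbf{p}|\leq 1}\sum_{i,j}\langle p_{i,j},\nabla^{\alpha}u_{i,j}\rangle=\sup_{|\textbf{p}|\leq 1}\langle \textbf{p},\nabla^{\alpha}u\rangle$. Finally I would encode this hard constraint through the indicator $J^{\ast}$: because $J^{\ast}(\textbf{p})=0$ on $|\textbf{p}|\leq 1$ and $J^{\ast}(\textbf{p})=\infty$ otherwise, restricting the supremum to the unit ball is the same as subtracting $J^{\ast}$ and letting $\textbf{p}$ range freely, which is exactly $J(u)=\sup_{\textbf{p}}\langle \textbf{p},\nabla^{\alpha}u\rangle-J^{\ast}(\textbf{p})$.

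The only delicate point is the interchange of summation and supremum in the third step, and I expect that to be the main thing to justify; here it is immediate, since the constraint is separable across a finite index set, so no minimax or measure-theoretic subtlety arises. As an equivalent and perhaps cleaner route, one may observe that $J$ is a seminorm, hence convex, lower semicontinuous and positively homogeneous, so it equals its biconjugate $J^{\ast\ast}$; a direct computation of the Fenchel conjugate of $J$ (viewed as a function of the field $\nabla^{\alpha}u$) then returns precisely the indicator $J^{\ast}$ of the dual unit ball, giving the same conclusion.
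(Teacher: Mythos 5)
Your proof is correct. Note that the paper itself offers no argument for this lemma---it simply defers to the citation \cite{16}---so there is no in-paper proof to compare against; your derivation via the pointwise identity $|w|=\sup_{|q|\leq 1}\langle q,w\rangle$ and the separability of the constraint over the finite pixel grid is the standard Chambolle-type argument and supplies exactly what the paper omits. One point worth stating explicitly if this were written out: the $J^{\ast}$ in the statement is the Fenchel conjugate of the mixed $l_1$--$l_2$ norm regarded as a function of the field $\nabla^{\alpha}u$ (equivalently, the indicator of the dual unit ball $\chi$), not the conjugate of $J$ as a function of $u$ itself; your closing remark about computing the conjugate of $J$ ``viewed as a function of $\nabla^{\alpha}u$'' is the correct reading, and under it the biconjugation route and the direct route you give coincide.
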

Based on Lemma \ref{lem1}, the corresponding primal-dual problem of
\eqref{eq15} is written as follows

\begin{align}\label{eq16}
(m^{k+1}_3,\textbf{p}^{k+1})=\arg\min_{m_3}\max_{\textbf{p}\in \chi}
~ \frac{\lambda_2}{\mu_3}\langle \textbf{p}, \nabla^{\alpha}m_3 \rangle
+\frac{1}{2}\|m_3-(u^{k+1}-n^{k}_3)\|^2_2,
\end{align}
where $\chi=\big \{\textbf{p}\in \mathds{R}^{2nm}
|\textbf{p}_i\in \mathds{R}^2, \|\textbf{p}_i\|_2\leq 1, \forall i\in \{1,\ldots,nm\} \big \}$.
By the iterative scheme, the solution of the primal-dual problem \eqref{eq16}
can be written as
\begin{align}\label{eq17}
&\textbf{p}^{k+1}=\arg\max_{\textbf{p}\in \chi}
\frac{\lambda_2}{\mu_3}\langle \textbf{p}, \nabla^{\alpha}\hat{m}_3^{k}\rangle
-\frac{1}{2\gamma}\|\textbf{p}-\textbf{p}^k\|^2_2,
\\\label{eq18}
&m_3^{k+1}=\arg \min_{m_3}
~ \frac{\lambda_2}{\mu_3}\langle \textbf{p}^{k+1}, \nabla^{\alpha}m_3 \rangle
+\frac{1}{2}\|m_3-(u^{k+1}-n^{k}_3)\|^2_2,
\\\label{nn3}
&\hat{m}_3^{k+1}=2m^{k+1}_3-m^k_3.
\end{align}
After simplifying \eqref{eq17}-\eqref{eq18},
the following statements are obtained
\begin{align}\label{nn1}
&\textbf{p}^{k+1}=\frac{\textbf{p}^{k}+\frac{\lambda_2 \gamma}{\mu_3}\nabla^{\alpha} \hat{m}_3^{k} }
{\max(|\textbf{p}^{k}+\frac{\lambda_2 \gamma}{\mu_3}\nabla^{\alpha} \hat{m}_3^{k} |,1)},\\\label{nn2}
&m^{k+1}_3=m^k_3-\tau\big(
\frac{\lambda_2 \gamma}{\mu_3}(\nabla^{\alpha})^{T}\textbf{p}^{k+1}
+m^k_3-(u^{k+1}+n^k_3)\big),
\end{align}
where $\tau$ is step size and $\gamma$ is a positive constant.\\
The general structure of the proposed method for
restoration of blurred images with impulse noise
is summarized in Algorithm \textcolor{blue}{1}.

\noindent\line(1,0){280}\\
\vspace{-0.3cm}
Algorithm 1: Restoration proposed algorithm.\\
\vspace{-0.3cm}
\noindent\line(1,0){280}\\
\begin{algorithmic}
\State   \textbf{Initialization:} $\alpha,\beta,\tau,\gamma,
\{\lambda_i\}^2_{i=1}, \{\mu_i\}^3_{i=1}, m^0, n^0,u^0$.
\For{k=1,\ldots}
\State Compute $u^{k}$ by solving \eqref{eqn2},
\State Compute $m^{k}_1$ by solving \eqref{eq12},
\State Compute $n^{k}_1$ by solving \eqref{eq6},
\State Compute $m^{k}_2$ by solving \eqref{eq14},
\State Compute $n^{k}_2$ by solving \eqref{eq8},
\State Compute $m^{k}_3$ by solving \eqref{nn3}-\eqref{nn2},
\State Compute $n^{k}_3$ by solving \eqref{eq10},
\EndFor
\State  If the stop condition is met in the above step, stop the loop.
\end{algorithmic}
\line(1,0){280}\\
\subsection{Extended algorithm for color image}
The proposed restoration algorithm for grayscale image is
 introduced in the previous subsection.
But this algorithm can be extended to color images.
In the following, the details of the proposed algorithm
for color images
are explained. The blurred matrix for color image
is given in $\mathds{R}^{3nm \times 3nm}$ as
\begin{align*}
A = \left[ {\begin{array}{*{20}{c}}
   {\begin{array}{*{20}{c}}
   A_{rr}  \\
   A_{gr}  \\
   A_{br}  \\
\end{array}} & {\begin{array}{*{20}{c}}
   A_{rg}  \\
   A_{gg}  \\
   A_{bg}  \\
\end{array}} & {\begin{array}{*{20}{c}}
   A_{rb}  \\
   A_{gb}  \\
   A_{bb}  \\
\end{array}}  \\
\end{array}} \right].
\end{align*}
Also, in the calculations related to this subsection, for different layers of the color image, we expand the values that is calculated in the previous subsection as
$u=[u_r;u_g;u_b]$, $f=[f_r;f_g;f_b]$, $m_i=[m_{i,r};m_{i,g};m_{i,b}]$, $n_i=[n_{i,r};n_{i,g};n_{i,b}]$
for $i=1,2,3$.\\
Using the same procedure mentioned in the pervious section the following phrase is obtained
\begin{align}\nonumber
\big(\mu_1 \Lambda^{\ast}\Lambda&+(\mu_3+\mu_2-\lambda \beta/\|u\|_2)I\big)\hat{F}u^{k+1}\\ \label{nm1}
&=\mu_1 \Lambda^{\ast}\hat{F}(f+m^k_1-n^k_1)+\mu_2\hat{F}W^T(m^k_2-n^k_2)
+\mu_3\hat{F}(m^k_3-n^k_3),
\end{align}
where $\hat{F}=I\otimes F$, here $\otimes$ shows the Kronecker product. Also $\Lambda$
is obtained after decomposed using Fourier transform as
\begin{align*}
\Lambda = \left[ {\begin{array}{*{20}{c}}
   {\begin{array}{*{20}{c}}
   \Lambda_{rr}  \\
   \Lambda_{gr}  \\
   \Lambda_{br}  \\
\end{array}} & {\begin{array}{*{20}{c}}
   \Lambda_{rg}  \\
   \Lambda_{gg}  \\
   \Lambda_{bg}  \\
\end{array}} & {\begin{array}{*{20}{c}}
   \Lambda_{rb}  \\
   \Lambda_{gb}  \\
   \Lambda_{bb}  \\
\end{array}}  \\
\end{array}} \right].
\end{align*}
Other parts of the proposed algorithm are
similar to those described in the pervious
section, and the calculation details are given in Algorithm
\textcolor{blue}{2}. The simulation results of the algorithms
described in this section are studied in the simulation results section.

\noindent\line(1,0){280}\\
\vspace{-0.3cm}
Algorithm 2: Color image restoration proposed algorithm.\\
\vspace{-0.3cm}
\noindent\line(1,0){280}\\
\begin{algorithmic}
\State   \textbf{Initialization:} $\alpha,\beta,\tau,\gamma,
\{\lambda_i\}^2_{i=1}, \{\mu_i\}^3_{i=1}, m^0, n^0,u^0$.
\For{k=1,\ldots}
\State Compute $u^{k}$ by solving \eqref{nm1},
\For{l=r,g,b}
\State Compute $m^{k}_{1,l}$ by solving: $m^{k}_{1,l}=\Psi_{1/\mu_1}(Au^{k}_l-f_l+n^{k-1}_{1,l})$,
\State Compute $n^{k}_{1,l}$ by solving: $n^{k}_{1,l}=n^{k-1}_{1,l}+Au^{k}_l-f-m^{k}_{1,l}$,
\State Compute $m^{k}_{2,l}$ by solving: $m^{k}_{2,l}=\Psi_{\lambda_1/\mu_2}(Wu^{k}_l+n^{k-1}_{2,l})$,
\State Compute $n^{k}_{2,l}$ by solving: $n^{k}_{2,l}=n^{k-1}_{2,l}+Wu^{k}_l-m^{k}_{2,l}$,
\State Compute $m^{k}_{3,l}$ by solving:
\State $\textbf{p}^{k}_l=\frac{\textbf{p}^{k-1}_l+\frac{\lambda_2 \gamma}{\mu_3}\nabla^{\alpha} \hat{m}_{3,l}^{k-1} }
{\max(|\textbf{p}^{k-1}_l+\frac{\lambda_2 \gamma}{\mu_3}\nabla^{\alpha} \hat{m}_{3,l}^{k-1} |,1)}$,
\State $m^{k}_{3,l}=m^{k-1}_{3,l}-\tau\big(
\frac{\lambda_2 \gamma}{\mu_3}(\nabla^{\alpha})^{T}\textbf{p}^{k}_l
+m^{k-1}_{3,l}-(u^{k}_l+n^{k-1}_{3,l})\big)$,
\State $\hat{m}_{3,l}^{k}=2m^{k}_{3,l}-m^{k-1}_{3,l}$,
\State Compute $n^{k}_{3,l}$ by solving: $n^{k}_{3,l}=n^{k-1}_{3,l}+u^{k}_l-m^{k}_{3,l}$,
\EndFor
\EndFor
\State  If the stop condition is met in the above step, stop the loop.
\end{algorithmic}
\line(1,0){280}\\

\section{Convergence analysis}
In this section, the convergence of the proposed method is studied. The convergence analysis for proposed method is based on the described method in \cite{3,7}.
\begin{lem}\label{lm2}
Let the objective function be increasing, that is
$\|Au-f\|_1+\lambda_1\big( \|Wu\|_1-\beta \|u\|_2\big)+\lambda_2\|u\|_{FTV}\rightarrow  \infty$
when $\| x\|_2 \rightarrow  \infty$. Also assume that
$\{u^k,m_1^k,m_2^k,m_3^k,n_1^k,n_2^k,n_3^k\}$ be the sequence generated by the proposed method,
then the following statements hold:\\

a)
\begin{align*}
L(u^{k+1},m_1^{k+1},m_2^{k+1},&m_3^{k+1},n_1^{k+1},n_2^{k+1},n_3^{k+1})
-L(u^{k},m_1^{k},m_2^{k},m_3^{k},n_1^{k},n_2^{k},n_3^{k})\\
&\leq \sum^{3}_{i=1}C_i\|m^{k+1}_i-m^{k}_i\|^2_2.
\end{align*}

b) If there exists a $p \in \partial_uL(u^{k+1},m_1^{k+1},m_2^{k+1},m_3^{k+1},n_1^{k+1},n_2^{k+1},n_3^{k+1})$
then
\begin{align*}
\| p\|_2+\sum^3_{i}(\partial_{m_i}&+\partial_{n_i})
L(u^{k+1},m_1^{k+1},m_2^{k+1},m_3^{k+1},n_1^{k+1},n_2^{k+1},n_3^{k+1})\\
& \leq \sum^{3}_{i=1}C_{i+3} \|m^{k+1}_i-m^k_i\|_2,
\end{align*}

where in conditions (a) and (b), $\{C_i\}^{6}_{i=1}$ are constant.

\end{lem}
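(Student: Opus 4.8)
The plan is to verify the two standard ingredients of a nonconvex--ADMM convergence proof---a sufficient-decrease estimate (a) and a relative-error (subgradient) estimate (b)---directly from the first-order optimality conditions of the seven sequential updates \eqref{eq4}--\eqref{eq10}. The common preparatory step is to record, for each block, the stationarity condition of its subproblem and to rewrite the dual updates \eqref{eq6}, \eqref{eq8}, \eqref{eq10} as $n_i^{k+1}-n_i^k=(\text{constraint residual})$. Combining the $m_i$-stationarity with the matching dual update yields the identities $\mu_1 n_1^{k+1}\in\partial\|m_1^{k+1}\|_1$, $\mu_2 n_2^{k+1}\in\lambda_1\partial\|m_2^{k+1}\|_1$ and $\mu_3 n_3^{k+1}\in\lambda_2\partial\|m_3^{k+1}\|_{FTV}$, which in particular show that the dual iterates stay bounded. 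The coercivity hypothesis is used to keep the primal iterates bounded, hence to keep $\|u^k\|_2$ away from $\infty$ and the concave term well defined.

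For part (a) I would split the one-step change $L^{k+1}-L^k$ into the seven telescoping increments induced by the updates in their scheduled order. Each primal update is a minimisation, so it can only lower $L$: the $m_1,m_2,m_3$ steps contribute nonpositive terms controlled from below by the curvature $\tfrac{\mu_i}{2}\|m_i^{k+1}-m_i^k\|_2^2$ of their quadratic penalties, and the $u$-step contributes a nonpositive term whose strong convexity comes from $\mu_1 A^TA+(\mu_2+\mu_3)I$ (recall $W^TW=I$ for the Parseval framelet). The only nonconvex summand $-\lambda_1\beta\|u\|_2$ is handled using that $\|\cdot\|_2$ is convex with $1$-bounded subdifferential, so it is dominated once $\mu_2+\mu_3$ exceeds $\lambda_1\beta/\|u^k\|_2$, exactly the positivity visible in \eqref{eqn1}--\eqref{eqn2}. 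Each dual update, by contrast, raises $L$ by exactly $\mu_i\|n_i^{k+1}-n_i^k\|_2^2$, since the residual equals $n_i^{k+1}-n_i^k$. Collecting the nonpositive primal terms and the positive dual terms, and then invoking the dual-increment bound described below, produces the claimed right-hand side $\sum_{i=1}^3 C_i\|m_i^{k+1}-m_i^k\|_2^2$.

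For part (b) I would evaluate an element of $\partial L$ at the new iterate $(u^{k+1},m^{k+1},n^{k+1})$ block by block. Subtracting the $u$-stationarity condition (which holds with $m^k,n^k$) from the $u$-gradient taken at $(m^{k+1},n^{k+1})$ shows that the $u$-block $p$ equals $\sum_i \mu_i B_i^T\big[(n_i^{k+1}-n_i^k)-(m_i^{k+1}-m_i^k)\big]$ with $B_1=A$, $B_2=W$, $B_3=I$; using the $m_i$-stationarity together with the dual update collapses the $m_i$-block and the $n_i$-block of $\partial L$ to $\mp\mu_i(n_i^{k+1}-n_i^k)$. Hence every component of $\partial L(u^{k+1},m^{k+1},n^{k+1})$ is a fixed linear combination of the increments $n_i^{k+1}-n_i^k$ and $m_i^{k+1}-m_i^k$, and the triangle inequality reduces (b) to the same dual-increment bound.

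The main obstacle, and the step on which both parts hinge, is the estimate $\|n_i^{k+1}-n_i^k\|_2\le L_i\|m_i^{k+1}-m_i^k\|_2$. This cannot follow from smoothness, since the relations $\mu_i n_i^{k+1}\in\partial g_i(m_i^{k+1})$ involve the nonsmooth functions $g_i\in\{\|\cdot\|_1,\lambda_1\|\cdot\|_1,\lambda_2\|\cdot\|_{FTV}\}$, and a naive subgradient-monotonicity argument only gives a one-sided inequality. I would instead exploit the explicit proximal form of the updates---$m_1,m_2$ are soft-thresholdings \eqref{eq12}, \eqref{eq14} and $m_3$ solves the primal--dual iteration \eqref{nn1}--\eqref{nn2}---so that $n_i^{k+1}$ is the firmly nonexpansive complement of a $1$-Lipschitz map; the firm-nonexpansiveness identity $\|m_i^{k+1}-m_i^k\|_2^2+\|n_i^{k+1}-n_i^k\|_2^2\le\|v^{k}-v^{k-1}\|_2^2$ then has to be sharpened into the desired one-sided control, following \cite{3,7}. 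Pinning down this Lipschitz-type relation cleanly---and fixing the admissible ranges of $\mu_i,\beta,\tau,\gamma$ that keep the constants $\{C_i\}_{i=1}^{6}$ finite---is where the genuine work lies; once it is in hand, parts (a) and (b) follow by the bookkeeping above.
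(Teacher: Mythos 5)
Your proposal follows essentially the same route as the paper's proof: for (a) you split $L^{k+1}-L^{k}$ into the per-block increments, use that each primal update is a minimisation (so contributes a nonpositive term), observe that each dual update raises $L$ by exactly $\mu_i\|n_i^{k+1}-n_i^k\|_2^2$, and then convert the dual increments into $\|m_i^{k+1}-m_i^k\|_2$ via the identities $\mu_i n_i^{k+1}\in\partial g_i(m_i^{k+1})$ obtained by combining the $m_i$-stationarity with the multiplier update; for (b) you evaluate $\partial L$ at the new iterate block by block and reduce everything to the same increments. This is exactly the paper's bookkeeping (the paper additionally uses Young's inequality with the specific choice $c_i=\mu_i/2L_i$ to produce $C_i=(3L_i-\mu_i)/2+L_i^2/\mu_i$, a detail you leave implicit but which is routine). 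The one substantive divergence is the key estimate $\|n_i^{k+1}-n_i^k\|_2\lesssim\|m_i^{k+1}-m_i^k\|_2$: the paper obtains it (and the companion descent inequalities $g_i(m_i^{k+1})-g_i(m_i^k)-\mu_i\langle n_i^k,m_i^{k+1}-m_i^k\rangle\le\tfrac{L_i}{2}\|m_i^{k+1}-m_i^k\|_2^2$) simply by declaring that the regularizers have ``Lipschitz continuous gradient'' with constants $L_i$ and applying the descent lemma, whereas you correctly point out that $\|\cdot\|_1$ and $\|\cdot\|_{FTV}$ are nonsmooth so this does not literally hold, and you propose instead to extract a Lipschitz-type relation from the firm nonexpansiveness of the soft-thresholding and primal--dual proximal maps. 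You do not carry that alternative out, so as written your argument has an open step --- but it is precisely the step the paper disposes of by assumption rather than by proof, and your skepticism about it is warranted. In short: same decomposition and same reduction, with your version being more honest about (though not closing) the smoothness gap that the paper papers over.
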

\begin{proof}
a) By problem \eqref{eq4}, we get
\begin{align*}
L(u^{k+1},m_1^{k},m_2^{k},m_3^{k},n_1^{k},n_2^{k},n_3^{k})
-L(u^{k},m_1^{k},m_2^{k},m_3^{k},n_1^{k},n_2^{k},n_3^{k})
\leq 0.
\end{align*}
Also by using \eqref{eq5}-\eqref{eq10}, we obtain
\begin{align*}
L(u^{k+1},&m_1^{k+1},m_2^{k+1},m_3^{k+1},n_1^{k+1},n_2^{k+1},n_3^{k+1})
-L(u^{k},m_1^{k},m_2^{k},m_3^{k},n_1^{k},n_2^{k},n_3^{k})\\
&\leq \|m^{k+1}_1\|_1-\|m^{k}_1\|_1+\lambda_1\big(\|m^{k+1}_2\|_1-\|m^{k}_2\|_1\big)
+\lambda_2\big(\|m^{k+1}_3\|_{FTV}-\|m^{k}_3\|_{FTV}\big)\\
&+\sum^{3}_{i=3}\mu_i \big(\|n^{k+1}_i-n^{k}_i\|^2_2-\langle n^{k}_i,m^{k+1}_i-m^{k}_i\rangle
-\frac{1}{2}\|m^{k+1}_i-m^k_i\|^2_2\\
&-\langle m^{k+1}_i-m^k_i, n^{k+1}_i-n^k_i\rangle\big).
\end{align*}
Based on optimality conditions for \eqref{eq5}-\eqref{eq10}, we have
$\partial_{m_1}\|m^{k+1}\|_1=\mu_1 n^{k+1}_1$,~
$\lambda_1 \partial_{m_2}\|m_2^{k+1}\|_1=\mu_2n^{k+1}_2$
and
$\lambda_2 \partial_{m_3}\|m_3^{k+1}\|_{FTV}=\mu_3n^{k+1}_3$,
then by using Lipschitz continuous gradient and Young's inequality, the following inequalities
for any positive $\{c_i\}^3_{i=1}$ hold
\begin{align*}
& \| m^{k+1}_1\|_1-\| m^{k}_1\|_1-\mu_1 \langle n^{k}_1, m^{k+1}_1-m^{k}_1 \rangle
\leq \frac{L_1}{2}\|m^{k+1}_1-m^{k}_1 \|^2_2,\\
& \lambda_1\big(\| m^{k+1}_2\|_1-\| m^{k}_2\|_1\big)
-\mu_2 \langle n^{k}_2, m^{k+1}_2-m^{k}_2 \rangle
\leq \frac{L_2}{2}\|m^{k+1}_2-m^{k}_2 \|^2_2,\\
& \lambda_2\big(\| m^{k+1}_3\|_{FTV}-\| m^{k}_3\|_{FTV}\big)
-\mu_3 \langle n^{k}_3, m^{k+1}_3-m^{k}_3 \rangle
\leq \frac{L_3}{2}\|m^{k+1}_3-m^{k}_3 \|^2_2,\\
&-\mu_i \langle m^{k+1}_i-m^{k}_i,n^{k+1}_i-n^{k}_i \rangle
\leq c_i\mu_i\|n^{k+1}_3-n^k_3\|^2_2+\frac{\mu_i}{4c_i}\|m^{k+1}_3-m^{k}_3\|^2_2,
~~i=1,2,3,
\end{align*}
if $\{c_i\}^3_{i=1}$ is chosen as $\{\mu_i/2L_i\}^3_{i=1}$, we get
\begin{align*}
L(u^{k+1},&m_1^{k+1},m_2^{k+1},m_3^{k+1},n_1^{k+1},n_2^{k+1},n_3^{k+1})
-L(u^{k},m_1^{k},m_2^{k},m_3^{k},n_1^{k},n_2^{k},n_3^{k})\\
&\leq \sum^{3}_i \big(\frac{L_i}{2}-\frac{\mu_i}{2}+\frac{\mu_1}{4c_i} \big)
\|m^{k+1}_i-m^k_i\|^2_2+(1+c_i)\mu_i\|n^{k+1}_i-n^k_i\|^2_2\\
&=\sum^{3}_{i=1}C_i\|m^{k+1}_i-m^k_i\|^2_2,
\end{align*}
where $C_i=(3L_i-\mu_i)/2+L_i^2/\mu_i$.
Therefore, the first part of the lemma is proved.\\
b) By using optimality condition for
\eqref{eq4}, there is a $q\in \partial_{u}\big(-\lambda_1 \beta_1 \|u\|_2\big)$
such that
\begin{align*}
q&+\mu_1A^T \big(Au^{k+1}-f+n^k_1-m^k_1\big)+\mu_2 W^T\big(Wu^{k+1}+n^k_2-m^{k}_2 \big)\\
&+\mu_3 \big(u^{k+1}+n^k_3-m^k_3\big)=0,
\end{align*}
now let
\begin{align*}
p&=q+\mu_1A^T \big(Au^{k+1}-f+n^{k+1}_1-m^{k+1}_1\big)+\mu_2 W^T\big(Wu^{k+1}+n^{k+1}_2-m^{k+1}_2 \big)\\
&+\mu_3 \big(u^{k+1}+n^{k+1}_3-m^{k+1}_3\big)
\in \partial_{u}L(u^{k+1},m^{k+1}_1,m^{k+1}_2,m^{k+1}_3,n^{k+1}_1,n^{k+1}_2,n^{k+1}_3),
\end{align*}
then we obtain
\begin{align}\label{nn77}
\|p\|_2 \leq \sum^{3}_{i=1}c_{i+3}(\mu_i+L_i)\|m^{k+1}_i-m^{k}_i \|_2,
\end{align}
where $\{c_i\}^6_{i=4}$ are constant. From the optimality condition of
\eqref{eq4}-\eqref{eq10}, the following inequalities for $i=1,2,3$ are obtained
\begin{align*}
&\|\partial_{m_i}L(u^{k+1},m^{k+1}_1,m^{k+1}_2,m^{k+1}_3,n^{k+1}_1,n^{k+1}_2,n^{k+1}_3) \|_2
\leq L_i \|m^{k+1}_i-m^{k}_i\|_2, \\
&\|\partial_{n_i}L(u^{k+1},m^{k+1}_1,m^{k+1}_2,m^{k+1}_3,n^{k+1}_1,n^{k+1}_2,n^{k+1}_3) \|_2
\leq L_i \|m^{k+1}_i-m^{k}_i\|_2.
\end{align*}
Therefore, by combining the above inequalities with \eqref{nn77}, the relation (b) is obtained.
\end{proof}
Using the above lemma, the final theorem for convergence is considered as follows. Due to the similarity of the proof with \cite{7}, the proof is omitted.
\begin{thm}
Considering the assumptions of Lemma \ref{lm2} and let
$\mu_i>(3+\sqrt{17})L_i/2$ for $i=1,2,3$, then\\
a) the sequence $\big( u^{k},\{m^{k}_i\}^3_{i=1},\{n^{k}_i\}^3_{i=1} \big)$ generated by proposed method is bounded and has at least one limit point.\\
b) $\|u^{k+1}-u^k\|_2 \rightarrow 0$, $\|m^{k+1}_i-m^k_i\|_2 \rightarrow 0$ and $\|n_i^{k+1}-n_i^k\|_2 \rightarrow 0$ for $i=1,2,3$.\\
c) each limit point $(u^{\ast},m_1^{\ast},m_2^{\ast},m_3^{\ast},n_1^{\ast},n_2^{\ast},n_3^{\ast})$ is a stationary point of $L(u,m_1,m_2,m_3,n_1,n_2,n_3)$, and $u^{\ast}$ is a stationary
point of the proposed model \eqref{eq1}.
\end{thm}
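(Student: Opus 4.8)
The plan is to run the standard convergence template for nonconvex ADMM: first turn Lemma \ref{lm2}(a) into a genuine sufficient-decrease property, then use it to force boundedness and convergence of the Lagrangian values, and finally extract a convergent subsequence and pass to the limit in the optimality conditions. The first move is purely algebraic. Writing $t_i=\mu_i/L_i$, the constant of Lemma \ref{lm2}(a) is
\begin{align*}
C_i=\frac{3L_i-\mu_i}{2}+\frac{L_i^2}{\mu_i}=\frac{L_i}{2t_i}\bigl(-t_i^2+3t_i+2\bigr),
\end{align*}
and the quadratic $-t_i^2+3t_i+2$ has positive root $(3+\sqrt{17})/2$; hence the hypothesis $\mu_i>(3+\sqrt{17})L_i/2$ is exactly what makes each $C_i<0$. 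Setting $\delta_i=-C_i>0$, Lemma \ref{lm2}(a) reads $L^{k+1}-L^{k}\le-\sum_{i=1}^{3}\delta_i\|m^{k+1}_i-m^{k}_i\|_2^2\le0$, so the sequence $\{L^k\}$ of Lagrangian values is monotonically non-increasing.

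For part (a) I would first show $L$ is bounded below along the iterates. The optimality relations recorded inside the proof of Lemma \ref{lm2}, namely $\partial_{m_1}\|m^{k+1}_1\|_1=\mu_1 n^{k+1}_1$, $\lambda_1\partial_{m_2}\|m^{k+1}_2\|_1=\mu_2 n^{k+1}_2$ and $\lambda_2\partial_{m_3}\|m^{k+1}_3\|_{FTV}=\mu_3 n^{k+1}_3$, bound the dual variables uniformly, since the subdifferentials of the $\ell_1$- and FTV-norms lie in fixed balls; thus $\{n^k_i\}$ is bounded. Completing the square in each coupling term, $L=(\text{split objective})+\sum_i\tfrac{\mu_i}{2}\|\mathrm{res}_i+n_i\|_2^2-\sum_i\tfrac{\mu_i}{2}\|n_i\|_2^2$, and discarding the nonnegative middle term bounds $L^k$ below by the split objective at $u^k$ up to a constant. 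The coercivity assumption then forces $\{u^k\}$, and through the constraints $\{m^k_i\}$, to stay bounded, for otherwise $L^k\to\infty$, contradicting monotonicity. A non-increasing sequence bounded below converges, and Bolzano--Weierstrass supplies a limit point.

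Part (b) is the asymptotic-regularity step. Since $\{L^k\}$ converges, telescoping the sufficient-decrease inequality gives $\sum_k\sum_i\delta_i\|m^{k+1}_i-m^{k}_i\|_2^2\le L^1-\lim_k L^k<\infty$, whence $\|m^{k+1}_i-m^{k}_i\|_2\to0$ for $i=1,2,3$. The $\partial_{n_i}$-estimate of Lemma \ref{lm2}(b), combined with the dual updates \eqref{eq6}, \eqref{eq8}, \eqref{eq10} which identify $\partial_{n_i}L$ at the new iterate with $\mu_i(n^{k+1}_i-n^{k}_i)$, yields $\mu_i\|n^{k+1}_i-n^{k}_i\|_2\le L_i\|m^{k+1}_i-m^{k}_i\|_2\to0$. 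Finally, writing the residual as $u^{k+1}-m^{k+1}_3=n^{k+1}_3-n^{k}_3$ and decomposing $u^{k+1}-u^{k}=(u^{k+1}-m^{k+1}_3)+(m^{k+1}_3-m^{k}_3)-(u^{k}-m^{k}_3)$, every piece tends to zero, so $\|u^{k+1}-u^{k}\|_2\to0$.

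For part (c), along a subsequence converging to $(u^{\ast},m^{\ast},n^{\ast})$ I would pass to the limit in the inclusion of Lemma \ref{lm2}(b), whose right-hand side is controlled by $\|m^{k+1}_i-m^{k}_i\|_2\to0$, so the subgradient residual vanishes; invoking outer semicontinuity (closed graph) of the limiting subdifferentials of the $\ell_1$-, FTV- and $\ell_2$-terms together with part (b) gives $0\in\partial L(u^{\ast},m^{\ast},n^{\ast})$, and eliminating $m^{\ast}$ via the limiting constraints $m_1^{\ast}=Au^{\ast}-f$, $m_2^{\ast}=Wu^{\ast}$, $m_3^{\ast}=u^{\ast}$ shows $u^{\ast}$ is stationary for \eqref{eq1}. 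I expect the genuine obstacles to sit at the two points where the nonconvex term $-\lambda_1\beta\|u\|_2$ enters: securing the lower bound on $L$ in part (a), where this concave term must be absorbed through coercivity rather than treated as nonnegative; and the limiting step in (c), where the graph of $\partial\|\cdot\|_2$ is only well behaved away from the origin, so one must first argue $u^{\ast}\neq0$ (again from coercivity and the lower bound) before differentiating.
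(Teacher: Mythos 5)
Your proposal is correct and follows essentially the route the paper intends: the paper omits its own proof and defers to the template of \cite{7}, which is exactly the argument you reconstruct --- sufficient decrease from Lemma \ref{lm2}(a) once each $C_i<0$, summability of $\|m^{k+1}_i-m^{k}_i\|_2^2$, control of the dual and primal increments via the updates \eqref{eq6}, \eqref{eq8}, \eqref{eq10} and Lemma \ref{lm2}(b), and closedness of the limiting subdifferential at a limit point. Your algebraic observation that $C_i=\frac{L_i}{2t_i}(-t_i^2+3t_i+2)$ with $t_i=\mu_i/L_i$, whose positive root is $(3+\sqrt{17})/2$, correctly identifies where the paper's otherwise unexplained threshold on $\mu_i$ comes from, and your flagged caveats (absorbing $-\lambda_1\beta\|u\|_2$ through coercivity, and the behaviour of $\partial\|\cdot\|_2$ at the origin) are precisely the points that the cited reference handles.
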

\section{Simulation results}
The simulation results of the proposed algorithm that introduced in
the previous sections are studied in this section. For simulation results, a computer based on Windows 10-64bit, Intel(R) Core(TM) i3-5005U CPU @2.00GHz, by matlab 2014b is used.
Also stopping criterion in this section is considered as
\begin{align*}
\frac{\|u^{k+1}-u^{k}\|_2}{\|u^{k+1} \|_2}\leq tol,
\end{align*}
where $u^k$ is the restored image at the $k$th iteration.
In all examples, the $tol$ equal to $10^{-3}$ is selected.
In the analysis of the proposed algorithm various tests including
 peak signal to noise ratio (PSNR), signal-to-noise
ratio (SNR), structural similarity (SSIM), Feature Similarity (FSIM)
and Relative Error (ReErr) are studied. More information on these
criteria value can be found in \cite{4,5,6}. Also ReErr is calculated as
\begin{align*}
ReErr=\frac{\| u^k-u\|^2_2}{\|u\|^2_2}.
\end{align*}

In the following examples, $M(len,\theta)$ denotes the
linear motion blur of a camera by $len$ pixels,
with an angle of $\theta$ degrees in a counterclockwise direction,
$A([r_1,r_2])$ shows the average blur with a size $[r_1,r_2]$,
$G(hsize, \sigma)$ denotes the Gaussian blur
of size $hsize$ with standard deviation $\sigma$ (positive).

\begin{example}
\normalfont
(Grayscale image) In this example, Algorithm \textcolor[rgb]{0,0,1}{1} is used for different images
from the USC-SIPI images database\footnote{http://sipi.usc.edu/database/}, i.e.,
``5.1.12 ($256 \times 256 $)'', ``5.2.10 ($512 \times 512$)'' and ``7.1.09 ($512 \times 512$)''.
Also ``4.2.05'' is changed to image with size ($256 \times 256 $) by Matlab internal functions and used in the simulation. For images ``5.1.12'' and ``5.2.10'', $A([11,11])$ with ``salt $\&$ pepper'' and ``Random-valued'' noise, respectively, with
different density are used, for image ``7.1.09'', blur kernel $G(7,4)$
with ``salt $\&$ pepper'' noise with
different densities are used.
Table \ref{tab1} compares numerical results for different densities value by proposed algorithm and method in
\cite{3}. These results indicate an improvement in the output of the proposed algorithm.
Also the blurred and noisy images for ``5.2.10'' are shown in Figure \ref{fig1},
and the restored images corresponding to each image are given below each image
in this figure. Also in order to show the effect of blurred kernel and noise on the image,
in Figure \ref{fig2} (a-d), each of the steps of blurring by $M(35,135)$ and adding noise by ``random-valued'' noise with
$10 \%$ density and restoring image is given separately.
In this case we have $PSNR=33.031,~SNR=30.241$, $SSIM= 0.92600$ and $FSIM=0.99712$.
The results in this example show a good representation of the proposed algorithm.
\end{example}
\begin{center}
\begin{figure}[h!]
\centering
\includegraphics[width=1.1\textwidth]{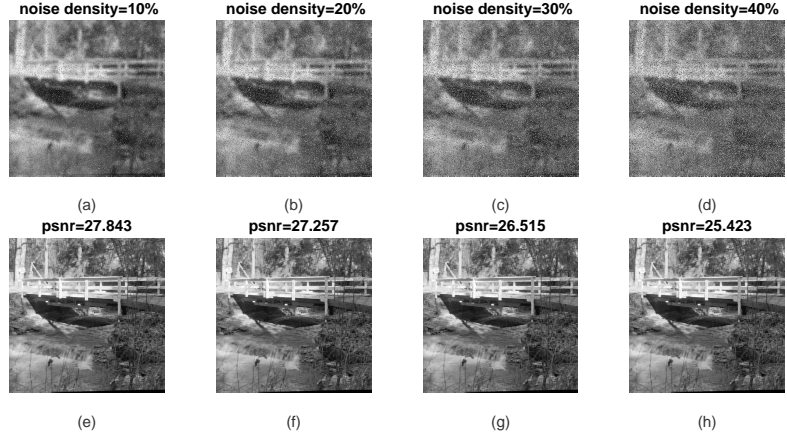}
\vspace{-1cm}
\caption{{\footnotesize
(a-d) Blurred and noisy images, (e-h) restored images for 5.2.10.
} }
\label{fig1}
\end{figure}
\end{center}

\begin{center}
\begin{table}[ht!]
\footnotesize
\caption{{\footnotesize Test results for different grayscale images.}}
\label{tab1}
\centering
\begin{tabular}{ccccccccccccccc}
\hline
imgae   & density  &method   &PSNR    &SNR     &ISNR   &ReErr       &SSIM    &FSSIM\\
\hline
        & $10\%$  &Proposed  &35.741  &33.393  &21.702 &0.00045781  &0.95021 &0.9991\\
5.1.12  &         &\cite{3}       & 33.682 &31.334  &19.55  &0.00073556  &0.95406 &0.9988\\
        & $30\%$  &Proposed  &28.344  &25.995  &18.746 &0.0025146   &0.89537 &0.99418\\
        &         &\cite{3}      &27.769  &25.421  &13.78  &0.00287     &0.90592 &0.99223\\
\hline
        & $10\%$  &Proposed  &27.843  &21.739  &11.106  &0.0067000  &0.83063 &0.99917\\
5.2.10  &         &\cite{3}       &27.419  &21.315  &10.654  &0.0073876  &0.81419 &0.99900\\
        & $30\%$  &Proposed  &26.515  &20.411  &13.120   &0.0090975  &0.77559 &0.99851\\
        &         &\cite{3}       &26.327  &20.223  &12.929  &0.0094992  &0.76804 &0.99835\\
\hline
        & $10\%$  &Proposed  &33.699  &27.89   &18.254  &0.0016255  &0.89393  &0.99977\\
7.1.09  &         &\cite{3}       &33.254  &27.445  &17.903  &0.001801   &0.88131  &0.99976\\
        & $30\%$  &Proposed  &32.385  &26.576  &21.557  &0.0021999  &0.86585  &0.99963\\
        &         &\cite{3}      &31.707  &25.898  &20.87   &0.0025715  &0.84426  &0.99961\\
\hline
\end{tabular}
\end{table}
\end{center}

\begin{example}
\normalfont
(Color image)
In this example, Algorithm \textcolor[rgb]{0,0,1}{2} is studied for different color
images. In the simulation results, ``Lena'' $(512 \times 512 )$ image is blurred by following kernel
\begin{align*}
A = \left[ {\begin{array}{*{20}{c}}
   {\begin{array}{*{20}{c}}
   0.7A([15,15])  \\
   0.1G(21,11)  \\
   0.0M(41,90)\\
\end{array}} & {\begin{array}{*{20}{c}}
   0.15G(11,9)  \\
   0.8A([17,17])  \\
     0.2M(21,45)\\
\end{array}} & {\begin{array}{*{20}{c}}
   0.15G(31,13)  \\
   0.1A([13,13]) \\
   0.6M(61,135)  \\
\end{array}}  \\
\end{array}} \right],
\end{align*}
``House'' $(512 \times 512 )$ image is blurred by
\begin{align*}
  A=\frac{1}{3} diag\big(A([5,5]),A([7,7]),A([9,9])\big),
\end{align*}
blurred kernel for ``Peppers'' image $(256 \times 256 )$ is considered as
\begin{align*}
A = \left[ {\begin{array}{*{20}{c}}
   {\begin{array}{*{20}{c}}
   0.8A([11,11])  \\
   0.15A([11,11])  \\
   0.2A([11,11])\\
\end{array}} & {\begin{array}{*{20}{c}}
   0.1G(11,5)  \\
   0.7G(11,5)   \\
   0.2G(11,5) \\
\end{array}} & {\begin{array}{*{20}{c}}
   0.1M(21,135)  \\
   0.15M(21,135) \\
   0.6M(21,135) \\
\end{array}}  \\
\end{array}} \right],
\end{align*}
and finally blurred kernel $A=B\otimes M(41, 135)$ is used for ``Plate'' with size ($256 \times 256$), where
\begin{align*}
A = \left[ {\begin{array}{*{20}{c}}
   {\begin{array}{*{20}{c}}
   0.8  \\
   0.15  \\
   0.2\\
\end{array}} & {\begin{array}{*{20}{c}}
   0.1  \\
   0.7   \\
   0.2\\
\end{array}} & {\begin{array}{*{20}{c}}
   0.1  \\
   0.15\\
   0.6\\
\end{array}}  \\
\end{array}} \right].
\end{align*}
Also for the noise process,
``salt $\&$ pepper'' noise is used for ``Lena'', ``House'' and ``Plate'' images and
``random-valued'' noise is used for ``Peppers'' image.
Table \ref{tab2} compares the results of the proposed method
 with the results of the algorithms in \cite{1}
and \cite{2}.
Other results are also given in Table \ref{tab3} for test images.
The results show the efficiency of the proposed
algorithm in comparison with these algorithms.
Also, the proposed algorithm outputs for different images with different noise densities
are given in Figures \ref{fig3}-\ref{fig5}.
Figure \ref{fig2} (e-h) shows the effect of the blurred kernel and noise on the color image.
In this results ``salt $\&$ pepper'' noise with $30\%$ density is used.
Test results for this case are included
$PSNR=28.871$, $SNR=27.049$, $ISNR=20.636$, $ReErr=0.0019729$,
$SSIM=0.9535$ and $FSIM=0.99884$.

\end{example}

\begin{center}
\begin{figure}[h!]
\centering
\includegraphics[width=1.1\textwidth]{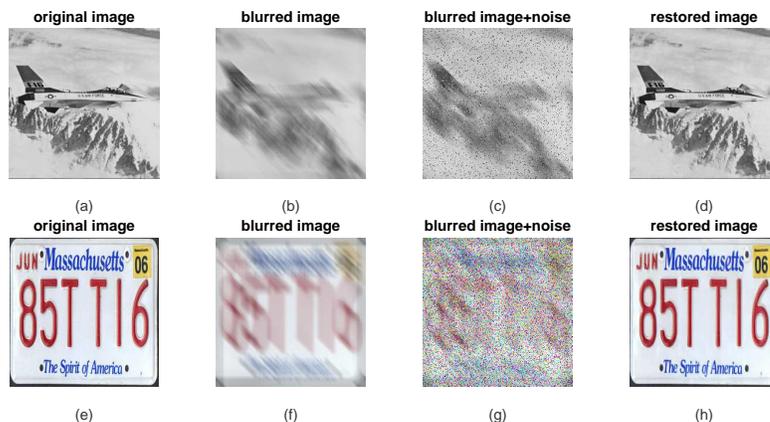}
\vspace{-1cm}
\caption{{\footnotesize
(a,e) Original images, (b,f) blurred images, (c,g) blurred and noisy images, (d,h) restored images.
} }
\label{fig2}
\end{figure}
\end{center}

\begin{center}
\begin{table}[ht!]
\footnotesize
\caption{{\footnotesize Test results for different color images.}}
\label{tab2}
\centering
\begin{tabular}{ccccccccccccccc}
\hline
&density&\multicolumn{2}{c}{$10\%$}   &\multicolumn{2}{c}{$20\%$  }   &\multicolumn{2}{c}{$30\%$  } &\multicolumn{2}{c}{$40\%$  }\\\cmidrule(l){3-4} \cmidrule(l){5-6} \cmidrule(l){7-8} \cmidrule(l){9-10}
imgae&method    &SNR &SSIM&SNR &SSIM &SNR &SSIM &SNR &SSIM\\
\hline
     &Proposed  &27.23  &0.9620    &26.83 &0.9591 &26.04&0.9524&24.04&0.9348\\
House&\cite{1}  &22.70  &0.9103    &22.33  &0.9045 &22.08&0.9027&21.61&0.8904\\
     &\cite{2}   &25.54  &0.9185    &24.57  &0.9071 &23.36&0.9024&22.58&0.8876\\
\hline
     &Proposed  &25.15  &0.9744    &24.96  &0.9734 &24.66&0.9717&24.33&0.9699\\
Lena &\cite{1}  &23.50  &0.8831    &23.48  &0.8841 &23.87&0.8907&23.71&0.8864\\
     & \cite{2}   &25.12  &0.9035    &24.48  &0.8941 &23.88&0.8860&23.90&0.8856\\
\hline
        &Proposed  &26.34  &0.9895    &25.80 &0.9877 &24.49&0.9826&20.43&0.9547\\
Peppers &\cite{1}  &22.34  &0.8530    &21.79  &0.8379 &20.88&0.8208&19.88&0.7941\\
         &\cite{2}    &23.85  &0.8681    &22.88  &0.8504 &21.73&0.8302&20.39&0.7982\\
\hline
\end{tabular}
\end{table}
\end{center}

\begin{center}
\begin{figure}[h!]
\centering
\includegraphics[width=1.1\textwidth]{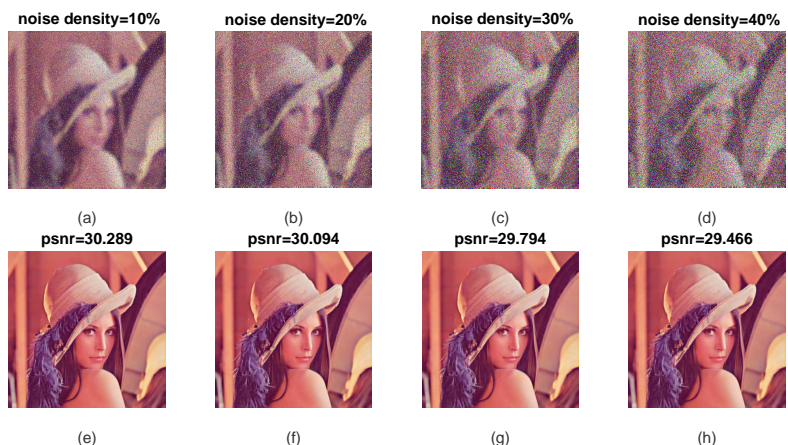}
\vspace{-1cm}
\caption{{\footnotesize
(a-d) Blurred and noisy images, (e-h) restored images for Lena $(512 \times 512)$.
} }
\label{fig3}
\end{figure}
\end{center}

\begin{center}
\begin{figure}[h!]
\centering
\includegraphics[width=1.1\textwidth]{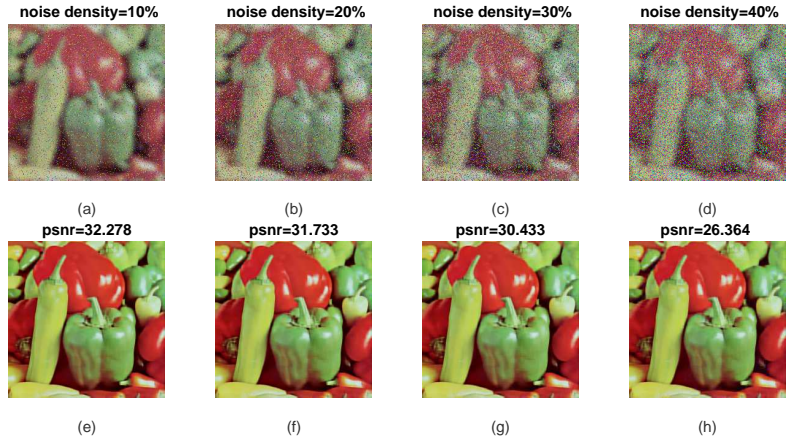}
\vspace{-1cm}
\caption{{\footnotesize
(a-d) Blurred and noisy images, (e-h) restored images for Peppers $(256\times 256)$.
} }
\label{fig4}
\end{figure}
\end{center}

\begin{center}
\begin{figure}[h!]
\centering
\includegraphics[width=1.1\textwidth]{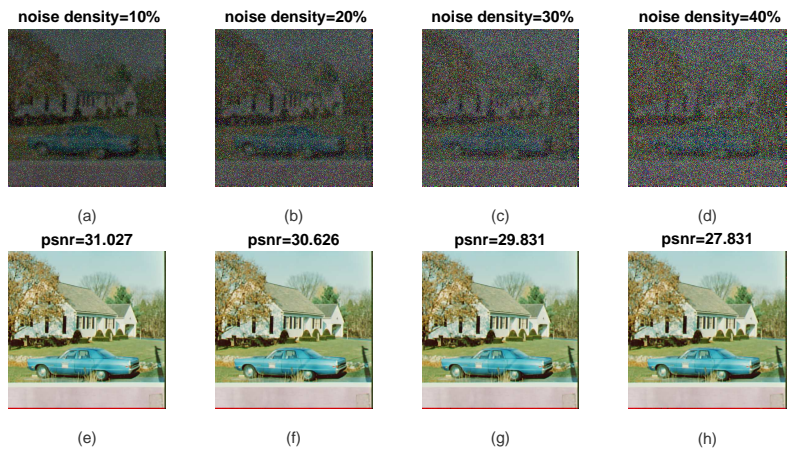}
\vspace{-1cm}
\caption{{\footnotesize
(a-d) Blurred and noisy images, (e-h) restored images for Househ
 $(512 \times 512)$.
} }
\label{fig5}
\end{figure}
\end{center}

\begin{center}
\begin{table}[ht!]
\footnotesize
\caption{{\footnotesize Test results for different grayscale images with different noise densities.}}
\label{tab3}
\centering
\begin{tabular}{ccccccccccccccc}
\hline
imgae   & density            &ISNR   &ReErr           &FSIM\\
\hline
        & $10\%$         &24.054    &0.0018923    &0.99976\\
House  & $20\%$         &23.907    &0.0020754    &0.99973\\
        & $30\%$        &23.342    &0.0024927      & 0.99965\\
        & $40\%$         &21.567    &0.0039502    &0.99947\\
\hline
        & $10\%$       &16.234   &0.0030536  &0.99914\\
Lena    & $20\%$       &18.433   &0.0031943  &0.99908\\
        & $30\%$        &19.703   &0.0034223  &0.99898\\
        & $40\%$        &20.507   &0.0036912  &0.99886\\
\hline
        & $10\%$       &16.785   &0.0023246   &0.99724\\
Peppers & $20\%$       &17.983   &0.0026354   &0.99716\\
        & $30\%$      &17.946   &0.0035549   &0.99689\\
        & $40\%$      &14.841   &0.0090734   &0.99597\\
\hline
\end{tabular}
\end{table}
\end{center}

\section{Conclusion}
In this paper the nonconvex model based on
fractional-order total variation and framelet transfer
is introduced for image restoration. The proposed model
is solved by ADMM and primal-dual methods. In the analysis
of the proposed algorithm
convergence analysis is studied and simulation
results are evaluated.
The results have been compared with other methods
and these results show the efficiency of the proposed
algorithm in restoring images that have been damaged
due to blur and noise.

\end{document}